\documentclass[conference]{IEEEtran}
\IEEEoverridecommandlockouts

\usepackage{makecell}
\usepackage{cite}
\usepackage{amsmath,amssymb,amsfonts}
\usepackage{multirow} 
\usepackage{algorithmic}
\usepackage{graphicx}
\usepackage{textcomp}
\usepackage{xcolor}

\usepackage{tabularx}
\usepackage{array}
\usepackage{balance}
\usepackage[utf8]{inputenc}
\usepackage{amsthm}
\usepackage{bm}
\usepackage{booktabs}
\usepackage{arydshln}
\usepackage{placeins}
\definecolor{mygreen}{RGB}{0,140,110}
\usepackage{float}
\floatstyle{ruled}
\newfloat{problem}{thp}{lop}
\floatname{problem}{Problem}
\newfloat{model}{thp}{lop}
\floatname{model}{Model}

\definecolor{niceblue}{rgb}{0, 0.5, 1.0}
\usepackage[
    colorlinks=true,
    linkcolor=niceblue,
    citecolor=niceblue,
    urlcolor=niceblue
]{hyperref}

\newtheorem{remark}{Remark}
\newtheorem{lemma}{Lemma}

\def\BibTeX{{\rm B\kern-.05em{\sc i\kern-.025em b}\kern-.08em
    T\kern-.1667em\lower.7ex\hbox{E}\kern-.125emX}}

\title{Activate the Dual Cones: A Tight Reformulation of Conic ACOPF Constraints}

\author{%
\IEEEauthorblockN{Saba Rafiei and Samuel Chevalier}
\IEEEauthorblockA{\textit{Department of Electrical and Biomedical Engineering}\\
\textit{University of Vermont}, Burlington, VT, USA\\
Email: \{srafiei, schevali\}@uvm.edu}
}

\begin{document}
\maketitle

\begin{abstract}
By exploiting the observed tightness of dual rotated second-order cone (RSOC) constraints, this paper transforms the dual of a conic ACOPF relaxation into an equivalent, non-conic problem where dual constraints are implicitly enforced through eliminated dual RSOC variables.
To accomplish this, we apply the RSOC-based Jabr relaxation of ACOPF, pose its dual, and then show that all dual RSOC constraints must be tight (i.e., active) at optimality. We then construct a reduced dual maximization problem with only non‐negativity constraints, avoiding the explicit RSOC inequality constraints. Numerical experiments confirm that the tight formulation recovers the same dual objective values as a mature conic solver (e.g., MOSEK via PowerModels) on various PGLib benchmark test systems (ranging from 3 to 1354 buses). The proposed formulation offers numerical advantages over its conic counterpart, and it allows us to define a bounding function that provides a guaranteed lower bound on system cost.
While this paper focuses on demonstrating the correctness and validity of the proposed structural simplification, it lays the groundwork for future GPU-accelerated first-order optimization methods that can exploit the simple bound-constrained structure of the proposed formulation.
\end{abstract}

\begin{IEEEkeywords}
AC Optimal Power Flow, Jabr relaxation, certified lower bounds, interior-point methods, rotated second-order cone, dual tightness, conic optimization.
\end{IEEEkeywords}

\section{Introduction}

The AC Optimal Power Flow (ACOPF) problem, which enforces the balance of real and reactive power subject to Kirchhoff’s laws and network constraints, is nonconvex due to the quadratic power flow equations. To address this nonconvexity, researchers have turned to convex relaxations of the full ACOPF formulation.
 Convex relaxation techniques have emerged as promising tools to obtain lower bounds or even exact solutions, under specific structural conditions~\cite{1,2}. 
Among these, Second-Order Cone Programming (SOCP) relaxations have become particularly popular. Jabr developed a reformulation of the power flow equations for radial (tree-structured) distribution networks using rotated second-order cones\cite{3}. This method replaced the nonconvex voltage coupling between buses with a convex conic constraint.

Many enhanced relaxations have made it feasible to obtain provably global or near-global solutions for OPF on large-scale networks (hundreds or thousands of buses) with much less computational effort than global nonconvex solvers~\cite{2,4}. Still, limitations remain. Even the best conic relaxations can have a residual optimality gap in some cases, and each new constraint added for tightness increases the solve time~\cite{5}. Prior work has explored Lagrangian-dual decompositions, where network-coupling constraints are relaxed, and the resulting dual is maximized via bundle or subgradient updates to obtain SOC-level lower bounds~\cite{6}. This method, however, still requires solving SOC subproblems iteratively, and their reported dual bounds can remain loose. This motivates algorithmic advances to solve conic formulations more efficiently.

The traditional approach for solving SOCPs and SDPs is interior-point methods (IPMs). In ACOPF relaxations, scalability challenges are particularly acute for SDP formulations. Even when exploiting network sparsity, IPM-based solutions of large-scale SDPs can become prohibitively expensive for real-world networks with thousands of buses~\cite{7}. This has motivated the use of computationally less demanding convex relaxations such as SOCP and LP-based approximations~\cite{7}, as well as tight surrogates that retain much of the bound quality of SOCP~\cite{8}. In the same spirit, researchers have developed first-order and dual methods that better exploit problem structure and parallelism. A widely studied approach in this class is the Alternating Direction Method of Multipliers (ADMM), which aims to reduce the per-iteration computational cost significantly~\cite{9,10}.





With GPU-accelerated optimization tools becoming increasingly available, first-order methods have gained popularity. Google, Gurobi, and NVIDIA have all recently released Primal-Dual Hybrid Gradient (PDHG) based solvers~\cite{applegate2025pdlp,ccorduk2025gpu,rothberg2025concurrent}. Applying projected gradient methods to conic relaxations of ACOPF is an emerging research direction~\cite{QIU2024110661,tanneau2024dual}.
Compared to general-purpose solvers like the Splitting Conic Solver (SCS), which applies an operator-splitting approach to a homogeneous self-dual embedding of the primal-dual pair~\cite{11}, projected and proximal gradient methods avoid matrix factorizations by relying only on simple gradient steps and projection/proximity operations~\cite{12}. This makes each iteration lightweight and customizable. Each iteration involves only local operations (no matrix factorizations), and updates can often be executed in parallel across buses or lines, making them highly suitable for GPU acceleration and large-scale power systems.
Recent results illustrate that first-order methods with parallelization can overcome the scalability limitations of classical solvers\cite{15,16}. 

To apply gradient-based solution methods to the ACOPF relaxation, efficient conic projection procedures are needed. Differentiable conic projections were proposed by Tanneau in~\cite{tanneau2024dual} for dual Lagrangian learning, and optimal RSOC projections were recently proposed by Bienstock in~\cite{Bienstock_proj} for cutting plane discovery. A key complication is that primal conic constraints are generally \textit{coupled}, often sharing voltage magnitude variables. Conic constraints in the dual domain, however, are decoupled, so optimal projection can occur in parallel. While attempting to derive an efficient dual cone projection procedure for gradient-based methods, we were able to prove that projecting onto the dual cone was unnecessary: at optimality, all dual conic constraints were already tight. This observation has inspired this paper.

In this paper, we do not apply first-order methods to solve ACOPF problems; instead, we lay the groundwork for future such methods by establishing a key result: conic projections in the dual domain are unnecessary. Toward this goal, our paper's contributions are as follows:

    
  \begin{itemize}
    \item First, in the dual of the Jabr-relaxed ACOPF, we prove that the dual RSOC constraints arising from (i) the voltage-product relaxation, (ii) the line apparent-power limits, and (iii) the generator quadratic-cost epigraph are all tight at optimality, eliminating the need for explicit RSOC inequalities in the relaxed ACOPF dual.
    \item Second, exploiting this tightness, we construct an equivalent non-conic reformulation of the ACOPF dual in which the RSOC constraints are implicitly enforced via variable elimination; from this, we derive a certified lower bound on the ACOPF generation cost.
    \item Finally, we demonstrate that our proposed formulation performs reliably as the size of the system increases, and it can outperform the conic formulations solved via commercial solvers in large test cases with negligible loss in solution quality.

\end{itemize}




Beyond the theoretical simplification, our proposed formulation is often
substantially easier for general-purpose interior-point solvers to
handle than the explicit conic dual, particularly on the largest
instances. On the 1354-bus case, Knitro solves our proposed model 45$\times$
faster than the conic model at identical accuracy. We do not claim a runtime
advantage over dedicated native conic solvers such as MOSEK; our
goal is to demonstrate the structural result and the certified bound.
These innovations
will enable future first-order methods applied to conic relaxations of ACOPF.

\section{Methodology}
\label{sec:Methodology}

This section formulates ACOPF and the convex relaxation used by our method. We adopt the standard branch-flow model in polar coordinates and follow with the second-order cone programming (SOCP) relaxation proposed by Jabr~\cite{3}.

\subsection{ACOPF Formulation}

Let $\mathcal{N}$ denote the set of buses and $\mathcal{L} \subseteq \mathcal{N} \times \mathcal{N}$ the set of transmission lines. For each bus $i \in \mathcal{N}$, let $V_i \in \mathbb{C}$ denote the complex voltage, and define $v_i\triangleq|V_i|$ and $\theta_i$ as its magnitude and phase angle, respectively. Each generator is associated with complex power injection $S_k^g = P_k^g + jQ_k^g$. The load at bus $i$ is denoted by $S_i^d = P_i^d + jQ_i^d$.
ACOPF seeks to minimize total generation cost subject to nonlinear power flow constraints and engineering limits, as stated in Model \ref{model:nonconvex}.
\begin{model}[h]
\caption{\hspace{-0.1cm}\textbf{:} Nonconvex ACOPF}
\label{model:nonconvex}
\vspace{-0.25cm}
\begin{subequations} \label{eq:acopf}
\begin{align}\min_{P^g,Q^g,V}\quad & \sum_{k\in\mathcal{G}}C_{k}(P_{k}^{g})\label{eq:cost}\\
\text{s.t.}\quad & S_{ij}\;=\;Y_{ij}^{*}\,V_{i}V_{i}^{*}\;-\;Y_{ij}^{*}\,V_{i}V_{j}^{*},&&\forall(i,j)\in\mathcal{L},\label{eq:Sij_def}\\
 & \sum_{k\in\mathcal{G}_{i}}S_{k}^{g}-S_{i}^{d}\;=\;\sum_{j:(i,j)\in\mathcal{L}}S_{ij},&&\forall i\in\mathcal{N},\label{eq:balance}\\
 & |V_{i}|^{\min}\le|V_{i}|\le|V_{i}|^{\max},&&\forall i\in\mathcal{N},\label{eq:volt}\\
 & P_{k}^{\min}\le P_{k}^{g}\le P_{k}^{\max},&&\forall k\in\mathcal{G},\\
 & Q_{k}^{\min}\le Q_{k}^{g}\le Q_{k}^{\max},&&\forall k\in\mathcal{G},\label{eq:gen}\\
 & |S_{ij}|\le S_{ij}^{\max},&&\forall(i,j)\in\mathcal{L}.\label{eq:thermal}
\end{align}
\end{subequations} 
\vspace{-0.5cm}
\end{model}
Here, $C_{k}(P_{k}^{g})=c^q_{k}\left(P_{k}^{g}\right)^{2}+c^l_{k}P_{k}^{g}$ is the convex quadratic cost function of generator $k$. Constraints~\eqref{eq:Sij_def} and \eqref{eq:balance} enforce the physical power flow and nodal power balance, respectively. Constraint~\eqref{eq:volt} imposes voltage magnitude limits at each bus, constraint~\eqref{eq:gen} ensures generator outputs remain within their capability limits, and constraint~\eqref{eq:thermal} enforces thermal limits on transmission lines.

\subsection{Implementation of Jabr's SOCP Relaxation}

To convexify the ACOPF problem, we adopt the second-order cone programming (SOCP) relaxation introduced by Jabr~\cite{3}, implemented in the real domain using lifted variables. This relaxation eliminates trigonometric terms by working directly in rectangular coordinates and lifting bilinear voltage products. For each transmission line $(i,j) \in \mathcal{L}$, we define
\begin{subequations}
\begin{align}
    c_{ij} &\triangleq  v_i v_j \cos(\theta_i - \theta_j),\label{eq: c_ij} \\
    s_{ij} &\triangleq v_i v_j \sin(\theta_i - \theta_j).
\end{align}
\end{subequations}
For each bus $i \in \mathcal{N}$, we define the squared voltage magnitude:
\begin{align}
    w_i \triangleq |v_i|^2.
\end{align}
We derive the active and reactive power flows in both directions based on the lifted voltages and the physical line admittance parameters:
\begin{subequations}\label{eq: PQij}
\begin{align}
 P_{ij} & =G_{ii}w_{i}+G_{ij}c_{ij}+B_{ij}s_{ij},\\
P_{ji} & =G_{jj}w_{j}+G_{ji}c_{ij}-B_{ij}s_{ij},\\
Q_{ij} & =-B_{ij}w_{i}-G_{ij}s_{ij}+B_{ij}c_{ij},\\
Q_{ji} & =-B_{jj}w_{j}+G_{ji}s_{ij}+B_{ji}c_{ij},
\end{align}
\end{subequations}
which are linear in the lifted variables. The key nonconvex constraints in this lifted space are given by
\begin{align}\label{eq: c2s2}
w_{i}w_{j} &= s_{ij}^{2}+c_{ij}^{2},\\
\arctan\left({s_{ij}}/{c_{ij}}\right)& =\theta_{i}-\theta_{j}.\label{eq: cycle}
\end{align}
Let $Q_r^n$ denote the $n$-dimensional rotated second-order cone,
\[
Q_r^n
=
\left\{
(x,y,z)\in \mathbb{R}\times \mathbb{R}\times \mathbb{R}^{n-2}
\;:\;
\begin{array}{l}
2xy \ge \|z\|_2^2,\\
x, y \ge 0
\end{array}
\right\}.
\]
In the Jabr convex relaxation, \eqref{eq: c2s2} is relaxed to a rotated second-order cone (RSOC) constraint:
\begin{align}\label{eq:jabr}
w_{i}w_{j}\ge c_{ij}^{2}+s_{ij}^{2}\quad\text{or}\quad\left(\tfrac{1}{2}w_{i},\,w_{j},\,[c_{ij},s_{ij}]\right)\in\mathcal{Q}_{r}^{4}.\end{align}

Meanwhile, the cycle constraint \eqref{eq: cycle} can be replaced with relaxed polynomial constraints~\cite{7}. In this paper, we drop the cycle constraint entirely. 

To fully transform \eqref{eq:acopf} into a conic problem, we also model the quadratic cost function using the scaled RSOC constraint
 $t \ge  \sum_{k\in\mathcal{G}}c^q_{k}\left(P_{k}^{g}\right)^{2},$
such that $\sum_{k\in\mathcal{G}}C_{k}(P_{k}^{g}) = t+\sum_{k\in\mathcal{G}}P^g_k c^{l}_k$.
Finally, we represent line limits as SOC constraints:
\begin{equation}
S_{ij}^{2,\max}\ge P_{ij}^{2}+Q_{ij}^{2},
\qquad
S_{ji}^{2,\max}\ge P_{ji}^{2}+Q_{ji}^{2}.
\end{equation}

\subsection{Canonicalization of Jabr's Relaxation}

In this subsection, we canonicalize the relaxed ACOPF model by collecting all primal variables into a vector $x$ via
\vspace{-4mm}
\begin{align}\label{eq: x_vec}
x={\rm vec}\left(t,P^{g},Q^{g},P^{s},Q^{s},P^{r},Q^{r},s,c,w\right),
\end{align}
where $P^{s}$ is the vector of powers flowing from the source bus ($P_{ij}$), $P^{r}$ is the vector of powers flowing from the receiving bus ($P_{ji}$), $c$ is the vector of lifted voltages from \eqref{eq: c_ij}, etc. Next, we encode the linear power balance constraints \eqref{eq:balance} and the line flow relations \eqref{eq: PQij}
within the linear constraint block in~\eqref{eq:primal-ccqp-b}.
Meanwhile, we capture the inequality constraints associated with all variable bounds via box constraint~\eqref{eq:primal-ccqp-c}. For reasons which will be explained, we also include constraints associated with the flow vectors $P^{s}$, $P^{r}$, $Q^{s}$, $Q^{r}$, such that $\ensuremath{Cx+d\le0}\;\Leftrightarrow\;\underline{x}\le x\le\overline{x}$. Lifted voltages are also appropriately bounded. We construct the objective function vector $m$ by concatenating  linear generator cost terms and a selection value, 1, to capture $t$'s quadratic contribution: 
\begin{align}\label{eq: m}
m={\rm vec}(1,c^{l},{\bf 0}). 
\end{align}
In Model~\ref{model:full_rsoc}, we state the full RSOC relaxation of the OPF problem, where each RSOC constraint is given in canonical form (i.e., $2xy\ge z^Tz$). Each constraint is stated with its corresponding dual variables. In the RSOC dual variable tuples \eqref{eq:rsoc1}-\eqref{eq:rsoc4}, the first two elements in the tuple are both scalars; the last element is an appropriately sized vector. Constraint~\eqref{eq:primal-ccqp-b} includes power balance, line flow definitions, and the definition of the auxiliary variables for the straight RSOC formulation. Constraint~\eqref{eq:primal-ccqp-c} represents the box constraint of the variables.
\begin{model}[h]
\caption{\hspace{-0.1cm}\textbf{:} Primal RSOC Relaxation}
\label{model:full_rsoc}
\vspace{-0.25cm}
\begin{subequations}\label{eq: full_rsoc}
\begin{align}
\min_{x}\quad & m^Tx\\
\text{s.t.}\quad & Ax+b=0 &  & \!\!\!:\lambda,\label{eq:primal-ccqp-b}\\[2pt]
 & Cx+d\le0 &  & \!\!\!:\mu,\label{eq:primal-ccqp-c}\\
 &\!\!\!\forall (i,j) \in {\mathcal L}: \nonumber\\
 & 2\left(\tfrac{1}{2}w_{i}\right)w_{j}\ge c_{ij}^{2}+s_{ij}^{2} &  & \!\!\!:(d_{ij}^{v_{1}},d_{ij}^{v_{2}},d_{ij}^{v}),\label{eq:rsoc1}\\
 & 2\left(\tfrac{1}{2}\right)S_{ij}^{2,{\rm max}}\ge P_{ij}^{2}+Q_{ij}^{2} &  & \!\!\!:(d_{ij}^{s_{1}},d_{ij}^{s_{2}},d_{ij}^{s}),\label{eq:rsoc2}\\
 & 2\left(\tfrac{1}{2}\right)S_{ij}^{2,{\rm max}}\ge P_{ji}^{2}+Q_{ji}^{2} &  & \!\!\!:(d_{ji}^{r_{1}},d_{ji}^{r_{2}},d_{ji}^{r}),\label{eq:rsoc3}\\
 & 2\left(\tfrac{1}{2}\right)t\ge\sum_{k\in\mathcal{G}}\left({(c_{k}^{q})^{0.5}}P_{k}^{g}\right)^{2} &  & \!\!\!:(d^{t_{1}},d^{t_{2}},d^{t}).\label{eq:rsoc4}
\end{align}
\end{subequations}
\vspace{-0.25cm}
\end{model}

The dual variable tuples associated with the RSOC constraints each belong to their respective dual RSOCs:
\begin{subequations}\label{eq:dual_tuples}
\begin{align}
&(d_{ij}^{v_{1}},d_{ij}^{v_{2}},d_{ij}^{v}),\ 
(d_{ij}^{s_{1}},d_{ij}^{s_{2}},d_{ij}^{s}),\ 
(d_{ji}^{r_{1}},d_{ji}^{r_{2}},d_{ji}^{r})
\in\mathcal{Q}_{r}^{4,*}, \\
&(d^{t_{1}},d^{t_{2}},d^{t}) 
\in\mathcal{Q}_{r}^{2+n_{g},*}. 
\end{align}
\end{subequations}
where $n_g=|{\mathcal G}|$. The RSOC is self-dual: $\mathcal{Q}_{r}^{n}=\mathcal{Q}_{r}^{n,*}$~\cite{Boyd_Vandenberghe_2004}.

\subsection{Formulating the Dual}
We construct the Lagrangian by dualizing all constraints. We first collect all dual RSOC variables into vector ${\tilde d}$:
\begin{align}
\tilde{d}={\rm vec}(d^{v_{1}}\!,d^{v_{2}}\!,d^{v}\!,d^{s_{1}}\!,d^{s_{2}}\!,d^{s}\!,d^{r_{1}}\!,d^{r_{2}}\!,d^{r}\!,d^{t_{1}}\!,d^{t_{2}}\!,d^{t}).
\end{align}
We cleanly capture the RSOC dualization by introducing a linear function of the primal variables, $Fx+g$, such that the inner product of the dual RSOC variable vector ${\tilde d}$ and $Fx+g$ is a valid dualization (see Appendix \ref{eq: affine_xfm} for details):
\begin{align}\label{eq: lag}
\mathcal{L}(x,\lambda,\mu,{\tilde d})=&m^{T}x+\lambda^{T}(Ax+b)\nonumber\\&+\mu^{T}\left(Cx+d\right)-{\tilde d}^{T}\left(Fx+g\right).
\end{align}
Using this Lagrangian, we construct the Lagrange dual via
\begin{align}\label{eq: LagDual}
\max_{\lambda,\mu\ge0,{\tilde d}\in\mathcal{Q}_{r}}\min_{x}\;\mathcal{L}(x,\lambda,\mu,{\tilde d}).
\end{align}
We now offer two methods for solving the inner primal; the first method will be used to demonstrate the correctness of the tight duals; the second one will be proposed as a lower bounding function. The first method, which is standard for constructing a conic dual, drives the coefficient terms in front of $x$ from the Lagrangian in \eqref{eq: lag} to 0. We thus arrive at the conic dual in Model~\ref{model:all_conic_dual}, which we refer to as "All Conic Dual", since all RSOC constraints are modeled as conic constraints.
\begin{model}[h]
\caption{\hspace{-0.1cm}\textbf{:} All Conic Dual (ACD)}
\label{model:all_conic_dual}
\vspace{-0.25cm}
\begin{subequations}\label{eq: conic_soln}
\begin{align}
\max_{\lambda,\mu\ge0,{\tilde d}\in\mathcal{Q}_{r}} & \;\;\lambda^{T}b+\mu^{T}d-{\tilde d}^{T}g\\
{\rm s.t.}\quad\; & \;\;m+A^{T}\lambda+C^{T}\mu-F^{T}{\tilde d}=0.
\end{align}
\end{subequations}
\vspace{-0.25cm}
\end{model}
In the second approach, we apply the dual norm~\cite{16} to~\eqref{eq: LagDual}; before doing so, we first remove the inequality term from the Lagrangian, and we replace it with the explicit constraint $\underline{x}\le x\le\overline{x}$ under the minimizer:
\begin{align}\label{eq: lag_daul_xb}
\max_{\lambda,{\tilde d}\in\mathcal{Q}_{r}}\min_{\underline{x}\le x\le\overline{x}}\;\underbrace{\left(m^{T}+\lambda^{T}A-{\tilde d}^{T}F\right)}_{\gamma}x+\lambda^{T}b-{\tilde d}^{T}g.
\end{align}
To apply the dual norm, we normalize all variables via mean vector \( \omega = 0.5 (\overline{x} + \underline{x}) \) and range vector \( \sigma = 0.5 (\overline{x} - \underline{x}) \). 
We then define the diagonal scaling matrix \( M_\sigma = \mathrm{diag}(\sigma) \) and normalize as
\( x_n = M_\sigma^{-1}(x - \omega) \). 
This affine transformation maps all variables to the canonical $\ell_\infty$ unit ball (i.e., \( \|x_n\|_\infty \le 1 \)). Applying the substitution $x=M_{\sigma}x_n+\omega$, the new optimization formulation (see ~\cite{16}) is given by
\begin{align}\label{eq: dn_opt}
\max_{\lambda,{\tilde d}\in\mathcal{Q}_{r}}&-\left\Vert \left(m^{T}+\lambda^{T}A-{\tilde d}^{T}F\right)M_{\sigma}\right\Vert _{1}\nonumber\\&+\left(m^{T}+\lambda^{T}A-{\tilde d}^{T}F\right)\omega+\lambda^{T}b-{\tilde d}^{T}g.
\end{align}

\begin{remark}
Formulations \eqref{eq: conic_soln} and \eqref{eq: dn_opt} have the same objective values. Given a solution to \eqref{eq: dn_opt}, the full solution to \eqref{eq: conic_soln} is recoverable by solving $C^{T}\mu=F^{T}{\tilde d}-A^{T}\lambda-m$ for $\mu\ge 0$, which has a unique solution due to the structure of $C$.
\end{remark}

\section{Dual RSOC Inequality Constraints are Tight}
\label{sec:rsoctight}
Using the formulations presented in the previous section, we now transform the dual model into an equivalent expression that does not contain conic constraints; instead, these constraints are implicitly enforced in the objective function and the equality constraints. A lower bounding function will then be derived from this result.

\subsection{Explicitly enforcing dual RSOC constraints}
Both maximization problems \eqref{eq: conic_soln} and \eqref{eq: dn_opt} contain a set of decoupled dual RSOC constraints $\forall (i,j)$:
\begin{subequations}\label{eq: dual_conic_eqs}
\begin{align}
2d_{ij}^{v_{1}}d_{ij}^{v_{2}} & \ge\left(d_{ij}^{v}\right)^{T}d_{ij}^{v}:\;\text{dual Jabr voltage relaxation}\label{eq: dual_conic_eqs1}\\
2d_{ij}^{s_{1}}d_{ij}^{s_{2}} & \ge\left(d_{ij}^{s}\right)^{T}d_{ij}^{s}:\;\text{dual sending-end flow limit}\label{eq: dual_conic_eqs2}\\
2d_{ji}^{r_{1}}d_{ji}^{r_{2}} & \ge\left(d_{ji}^{r}\right)^{T}d_{ji}^{r}:\;\text{dual receiving-end flow limit}\label{eq: dual_conic_eqs3}\\
2d^{t_{1}}d^{t_{2}} & \ge\left(d^{t}\right)^{T}d^{t}:\;\;\;\,\text{dual generator cost epigraph}.\label{eq: dual_conic_eqs4}
\end{align}
\end{subequations}

\begin{lemma}
\label{lemma:1}
When \eqref{eq: conic_soln} is solved to optimality, the conic inequalities \eqref{eq: dual_conic_eqs2}-\eqref{eq: dual_conic_eqs4} are tight, i.e., they hold with equality.
\end{lemma}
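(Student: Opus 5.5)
The plan is to exploit a structural feature shared by \eqref{eq:rsoc2}--\eqref{eq:rsoc4} but not by \eqref{eq:rsoc1}: in each of these cones, at least one of the two ``scalar'' slots of the RSOC is a \emph{constant} rather than a primal variable. In the affine map $Fx+g$ of the Lagrangian \eqref{eq: lag}, a constant slot contributes only to $g$ and has a zero row in $F$. The corresponding dual scalar therefore never enters the equality constraint $m+A^T\lambda+C^T\mu-F^T\tilde d=0$ of \eqref{eq: conic_soln}. It appears only in the objective term $-\tilde d^Tg$, with a strictly negative coefficient. With every other variable held fixed, the dual would like to push such a scalar as low as the cone allows, and that forces the cone inequality to be active. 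The argument is by contradiction: assume an optimal point at which one of \eqref{eq: dual_conic_eqs2}--\eqref{eq: dual_conic_eqs4} is strict, and construct a feasible point with strictly larger objective.

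\textbf{Flow-limit cones \eqref{eq: dual_conic_eqs2}--\eqref{eq: dual_conic_eqs3}.} Both scalar slots are constants, namely $\tfrac12$ and $S_{ij}^{2,\max}$. Hence $d_{ij}^{s_1}$ and $d_{ij}^{s_2}$ appear only through $-\bigl(\tfrac12 d_{ij}^{s_1}+S_{ij}^{2,\max}d_{ij}^{s_2}\bigr)$, up to the convention of Appendix~\ref{eq: affine_xfm}. Only $d_{ij}^{s}$ couples to $P_{ij},Q_{ij}$ through $F$. Fix $d_{ij}^s$ and all other duals. Then $(d_{ij}^{s_1},d_{ij}^{s_2})$ solves
\[
\min\ \tfrac12 d^{s_1}+S^{2,\max}_{ij}d^{s_2}\quad \text{s.t.}\quad 2d^{s_1}d^{s_2}\ge \|d^s_{ij}\|^2,\; d^{s_1},d^{s_2}\ge0 .
\]
The argument splits into two cases.
\begin{itemize}
\item If $2d^{s_1}d^{s_2}>\|d^s_{ij}\|^2$ and, say, $d^{s_1}>0$, replace $d^{s_1}$ by $\|d^s_{ij}\|^2/(2d^{s_2})$ when $d^{s_2}>0$, or by $0$ when $d^{s_2}=0$, which forces $d^s_{ij}=0$. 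This keeps feasibility and strictly increases the dual objective.
\item If $d^{s_1}=0$ but $d^{s_2}>0$, then strictness forces $\|d^s_{ij}\|^2<0$, which is impossible. The symmetric subcase is handled the same way, using $S_{ij}^{2,\max}>0$.
\end{itemize}
The receiving-end cone \eqref{eq: dual_conic_eqs3} is identical. Both cases contradict optimality, so equality holds.

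\textbf{Cost epigraph \eqref{eq: dual_conic_eqs4}.} Only the first slot is constant. The slot multiplying $t$ carries $d^{t_2}$, which enters the equality row for $t$. I would first read off that row: $t$ has objective coefficient $1$ from \eqref{eq: m}, and I expect that $t$ has no active box bound. This gives $d^{t_2}=1$ (with the appendix's sign convention), or more generally $d^{t_2}$ fixed by the other duals. With $d^{t_2}$ and $d^t$ fixed, $d^{t_1}$ appears only as $-\tfrac12 d^{t_1}$ in the objective. Decreasing it to $\|d^t\|^2/(2d^{t_2})$ is feasible and strictly improving whenever the inequality is strict, so again tightness follows.

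\textbf{Expected obstacle.} The main difficulty is bookkeeping rather than ideas. I must verify from Appendix~\ref{eq: affine_xfm} exactly which rows of $F$ vanish, and check the signs of the $g$ entries. I also need the equality row for $t$ to genuinely pin $d^{t_2}>0$. If $t$ carries box bounds in $Cx+d\le 0$, then $\mu_t$ also appears in that row. In that case I would argue that its upper-bound multiplier can be taken zero at optimum, since $t$ is minimized and its upper bound is inactive. Alternatively, I would note that $d^{t_2}>0$ still holds and that $d^{t_1}$ remains absent from all equality rows, which is all the scaling argument needs. Finally, this argument visibly fails for \eqref{eq: dual_conic_eqs1}, because both $w_i$ and $w_j$ are variables and neither scalar dual is free. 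That explains why the lemma excludes it.
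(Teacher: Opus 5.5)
Your proposal is correct and follows essentially the same route as the paper. The constant slots give zero rows of $F$, so $d_{ij}^{s_1},d_{ij}^{s_2},d_{ji}^{r_1},d_{ji}^{r_2},d^{t_1}$ enter only through $-\tilde d^Tg$ with negative coefficients, and lowering a slack one to the cone boundary strictly improves the objective. Your detour through $d^{t_2}=1$ is unnecessary, since strictness alone forces $d^{t_2}>0$, as you yourself observe; your case split is also slightly more careful than the paper's about keeping the decreased scalar nonnegative.
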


\begin{proof}
We first note a structural property of the dualization map: for each RSOC
constraint of the form $2\,(\tfrac12)\,y \ge \|z\|_2^2$ appearing in
\eqref{eq:rsoc2}--\eqref{eq:rsoc4}, the row of $F$ associated with the first scalar of the dual tuple is identically zero, while the
corresponding component of $g$ equals the constant $\tfrac12$. 
Consequently, the scalars
$d_{ij}^{s_1},d_{ji}^{r_1},d^{t_1}$ do \emph{not} appear in the affine
equality $m+A^T\lambda+C^T\mu-F^T\tilde d=0$; they enter the dual
problem of \eqref{eq: conic_soln} only through the objective term $-\tilde d^T g$ and the
corresponding RSOC inequality.

We now consider the term $-{\tilde d}^{T}g$ in the objective of \eqref{eq: conic_soln}; this term collects all dual variables which encounter constant coefficients in the primal constraints of \eqref{eq: full_rsoc} when these constraints are dualized:
\begin{align}\label{eq: dtg}
    -{\tilde d}^{T}g=\sum_{ij}&\left(-\frac{1}{2}d_{ij}^{s_{1}}-S_{ij}^{{\rm max}}d_{ij}^{s_{2}}\right)\nonumber\\&+\sum_{ij}\left(-\frac{1}{2}d_{ji}^{r_{1}}-S_{ji}^{{\rm max}}d_{ji}^{r_{2}}\right)-\frac{1}{2}d^{t_{1}}.
\end{align}
Let \(s\) locally denote any scalar variable from \eqref{eq: dtg} (e.g., $d_{ij}^{s_{1}}$, $d_{ij}^{s_{2}}$, $d_{ji}^{r_{1}}$, $d_{ji}^{r_{2}}$ or $d^{t_{1}}$). In \eqref{eq: dtg}, $s$ appears with a strictly negative coefficient. Moreover, \(s\) is linked to the remaining dual variables only through the single constraint \(s \ge f(\cdot)\) appearing in \eqref{eq: dual_conic_eqs2}-\eqref{eq: dual_conic_eqs4}. Therefore, if the constraint were slack at a maximizer, we could decrease \(s\) and strictly improve the objective, contradicting optimality. Thus, each inequality in~\eqref{eq: dual_conic_eqs2}-\eqref{eq: dual_conic_eqs4} must be tight at optimality.
\end{proof}

Using this result, we may also infer the value of $d^{t_2}$ at optimality. A similar result was first reported in~\cite{16} in the DCOPF case, but here we extend this to the ACOPF case, and we offer a more complete proof.

\begin{lemma}
\label{lemma:2}
The maximizer of \eqref{eq: conic_soln} satisfies
$d^{t_2}=1$.
\end{lemma}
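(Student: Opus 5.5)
The argument rests on the dual feasibility equation $m+A^{T}\lambda+C^{T}\mu-F^{T}\tilde d=0$, read in the row corresponding to the epigraph variable $t$. The variable $t$ enters the primal in only two places. It appears in the objective through the first entry of $m$, which by \eqref{eq: m} equals $1$. It also appears in the cost epigraph constraint \eqref{eq:rsoc4}, written as $2(\tfrac12)t\ge\sum_k((c^q_k)^{0.5}P^g_k)^2$. It does not appear in the linear block $Ax+b=0$, which consists of power balance and flow definitions. Whether it appears in the box block $Cx+d\le 0$ depends on whether a finite upper or lower bound is imposed on $t$.

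I would first fix the convention of the affine map $Fx+g$ from Appendix~\ref{eq: affine_xfm}. For \eqref{eq:rsoc4}, the canonical RSOC triple is $(\tfrac12,\,t,\,[(c^q_k)^{0.5}P^g_k]_k)$. The first scalar is constant, which matches the zero row of $F$ and the $g$ entry $\tfrac12$ used in the proof of Lemma~\ref{lemma:1}. The second scalar is exactly $t$, so the row of $F$ paired with $d^{t_2}$ is the unit vector selecting $t$. No other row of $F$ involves $t$. Reading off the $t$-component of the dual equality then gives
\begin{align*}
1 + (C^T\mu)_t - d^{t_2} = 0 .
\end{align*}
If $t$ carries no active bound, this reduces to $d^{t_2}=1$.

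The main obstacle is the box constraint on $t$, since the canonicalization states that all variables are bounded. If $t$ has bounds $\underline t\le t\le\overline t$, then $(C^T\mu)_t=\overline\mu_t-\underline\mu_t$, and I must show both multipliers vanish at the optimum. For the upper bound, $\overline t$ can be chosen strictly larger than any attainable $\sum_k c^q_k(P^g_k)^2$, since this quantity is bounded by the generator limits. The constraint $t\le\overline t$ is then never active at a primal optimum, and complementary slackness forces $\overline\mu_t=0$; strong duality holds because the conic relaxation is assumed to satisfy Slater's condition. For the lower bound, the natural choice is $\underline t=0$. This bound is implied by \eqref{eq:rsoc4} itself, so it can only be active when $P^g=0$. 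To rule out a nonzero $\underline\mu_t$, I would use the dual-maximization argument of Lemma~\ref{lemma:1}. In \eqref{eq: conic_soln}, $\underline\mu_t$ enters the objective with coefficient $\underline t$ (possibly $0$) and otherwise only raises $d^{t_2}$, and raising $d^{t_2}$ can only loosen the cone constraint $2d^{t_1}d^{t_2}\ge\|d^t\|^2$. So one can choose a maximizer with $\underline\mu_t=0$. Alternatively, the simplest route is to assume that $t$ is excluded from the box constraints, so that its $C$-column is zero.

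Combining these steps gives $d^{t_2}=1$. As a consistency check, Lemma~\ref{lemma:1} then yields $d^{t_1}=\tfrac12\|d^t\|^2$. The generator contribution to the dual objective becomes $-\tfrac14\sum_k (d^t_k)^2$, which is the expected conjugate of the quadratic cost. This matches the DCOPF result in~\cite{16}.
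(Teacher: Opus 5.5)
\paragraph{Where the proposal breaks.} Your main idea is sound: read the $t$-row of the ACD equality, $d^{t_2}=1+\overline\mu_t-\underline\mu_t$, and kill both box multipliers on $t$ by complementary slackness. The lower-bound step, however, fails as written.

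First, the sign is reversed. By your own identity, $\underline\mu_t$ \emph{lowers} $d^{t_2}$. That tightens $2d^{t_1}d^{t_2}\ge\|d^t\|_2^2$ and forces $d^{t_1}$ up, which costs $\tfrac12 d^{t_1}$ in the objective. Under the sign you state (``raises $d^{t_2}$, loosens the cone''), a positive $\underline\mu_t$ would be beneficial, so ``one can choose $\underline\mu_t=0$'' does not follow.

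Second, even with the sign fixed, you only get \emph{some} maximizer with $d^{t_2}=1$, and with $\underline t=0$ that is all one can get. Suppose $t^\star=0$ at the primal optimum, e.g.\ when all $c^q_k=0$. Complementary slackness on the cost cone then gives $d^{t_1}=0$, hence $d^t=0$. Neither $\underline\mu_t$ (objective coefficient $\underline t=0$) nor $d^{t_2}$ (zero entry of $g$) enters the objective. So every $\underline\mu_t\in[0,1]$ with $d^{t_2}=1-\underline\mu_t$ is a maximizer, and the lemma's claim about \emph{the} maximizer fails for your box. Dropping $t$ from the box avoids this for ACD. It departs from the canonicalization, though, which bounds every coordinate so that $M_\sigma$ in \eqref{eq: dn_opt} is finite.

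\paragraph{The fix, and how your route compares.} Take a strictly negative lower bound, e.g.\ the paper's $|t|\le\bar t$. Then $t^\star\ge 0>-\bar t$, so complementary slackness forces $\underline\mu_t=0$ at every dual optimum, exactly as it forces $\overline\mu_t=0$. With that change your proof is a valid alternative to the paper's.

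The paper stays entirely on the dual side. In \eqref{eq: dn_opt} the $t$-box multipliers are absorbed into $-\bar t|1-d^{t_2}|$, and Lemma~\ref{lemma:1} eliminates $d^{t_1}$. Maximizing the concave one-dimensional $h(d^{t_2})$ then gives $d^{t_2}=1$ uniquely, provided $\bar t\ge\tfrac14\|d^t\|_2^2$ at the optimum. That proviso is a condition on the unknown dual solution, which the paper motivates rather than derives.

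Your route costs an extra assumption, strong duality (Slater; primal attainment is automatic from the box). In exchange it needs only the primal condition that $\bar t$ exceed the attainable quadratic cost. It also supplies the paper's missing link. With $d^{t_2}=1$ and $d^{t_1}=\tfrac12\|d^t\|_2^2$, complementary slackness on the cost cone reads $\|\tfrac12 d^t+z\|_2^2=0$, where $z_k=(c^q_k)^{0.5}P^g_k$. Hence $\tfrac14\|d^t\|_2^2=t^\star$, which is exactly what justifies the paper's choice of $\bar t$.
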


\begin{proof}
Assume $t$ is the first component of $x$, as defined in \eqref{eq: x_vec}. Suppose this coordinate is bounded above by a known constant $\bar t>0$ (a conservative upper bound on the maximum total quadratic cost) so that \(|t|\le \bar t\) in the primal. Now, consider the Lagrangian in \eqref{eq: lag_daul_xb}. The first component in $\gamma$ is given by
\begin{align}
\gamma_1=1-d^{t_2},
\end{align}
where $1$ is the selection element of $m$ which selects for $t$ (see \eqref{eq: m}). Since all other terms ($\gamma_2$, $\gamma_3$, etc.) correspond to different primal coordinates and therefore do not contribute to $t$. Minimizing the Lagrangian over just $t$ yields
\begin{align}
\min_{|t|\le \bar t} (1-d^{t_{2}})\,t
\;=\;
-\bar t\,|1-d^{t_{2}}|,
\end{align}
Hence, the dual objective contains the term $-\bar t\,|1-d^{t_{2}}|$. Next, by tightness of \eqref{eq: dual_conic_eqs4} at optimality,
\begin{align}
    d^{t_1}=\frac{\|d^{t}\|_2^2}{2d^{t_{2}}}, \qquad d^{t_{2}}>0
\end{align}
so the term $-0.5\,d^{t_{1}}$ in \eqref{eq: dtg} becomes
\begin{align}
-0.5\,d^{t_{1}}=-\frac{1}{4}\,\frac{\|d^t\|_2^2}{d^{t_{2}}}.
\end{align}
Holding all other dual variables fixed, the $d^{t_{2}}$-dependent part of the objective in \eqref{eq: dn_opt} (where $\omega_1=0$) is
\begin{align}
h(d^{t_{2}})=-\bar t\,|1-d^{t_{2}}|-\frac{1}{4}\frac{\|d^{t}\|_{2}^{2}}{d^{t_{2}}},\qquad d^{t_{2}}>0.
\end{align}
The function $h$ is concave and non-smooth at $d^{t_{2}}=1$. Its sub-differential is given by
\begin{align*}
\frac{\partial h}{\partial d^{t_{2}}}=\begin{cases}
{\displaystyle \bar t + \frac{1}{4}\frac{\|d^{t}\|_{2}^{2}}{\left(d^{t_{2}}\right)^{2}},} & 0<d^{t_2}<1,\\[1.2em]
{\displaystyle \Big[-\bar{t}+\frac{1}{4}\|d^{t}\|_{2}^{2},\;\bar t +\frac{1}{4}\|d^{t}\|_{2}^{2}\Big],} & d^{t_2}=1,\\[1.2em]
{\displaystyle -\bar{t}+\frac{1}{4}\frac{\|d^{t}\|_{2}^{2}}{(d^{t_2})^{2}},} & d^{t_2}>1.
\end{cases}
\end{align*}
We choose $\bar t$ sufficiently large such that$\bar t \ge \sum_{k=1}^{n_g} c_k^q (P_k^{\max})^2$, which will be large enough to ensure $\bar t \ge \tfrac14\|d_t\|_2^2$ at optimality\footnote{Since $\bar t$ is just an upper bound, it can be set arbitrarily large and have no impact on the solution of the ACOPF relaxation.}. Under this condition, every element of $\partial h$ is strictly positive for $0<d^{t_2}<1$ and strictly negative for $d^{t_2}>1$, while $0\in\partial h(1)$. Consequently, $h$ attains its maximum at $d^{t_2}=1$, and any maximizer of \eqref{eq: conic_soln} must satisfy $d^{t_2}=1$.
\end{proof}

As a final result, we also show that, at optimality, the dual of the Jabr voltage relaxation \eqref{eq: dual_conic_eqs1} is also tight.

\begin{lemma} \label{lemma:cs-tight}
Assume there exists a feasible solution to the RSOC relaxation (Model \ref{model:full_rsoc}). For each line $(i,j)$, the dual RSOC inequality of \eqref{eq: dual_conic_eqs1} is tight at the KKT solution.
\end{lemma}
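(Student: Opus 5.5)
The argument differs from Lemma~\ref{lemma:1}: for the Jabr cone \eqref{eq:rsoc1} the row of $g$ is zero and neither $d_{ij}^{v_1}$ nor $d_{ij}^{v_2}$ appears in the objective. Instead, both scalars enter the affine equality $m+A^T\lambda+C^T\mu-F^T\tilde d=0$ through the columns of $w_i$ and $w_j$. I would therefore argue through the KKT system rather than by a direct ``decrease a variable'' step.

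\textbf{Step 1: existence of a KKT pair.} Because Model~\ref{model:full_rsoc} is feasible and all variables are boxed, the primal optimum is attained. I will assume a Slater-type condition holds, or that the KKT solution referred to in the statement is given, so that a primal--dual pair $(x^\star,\lambda^\star,\mu^\star,\tilde d^\star)$ with zero duality gap exists.

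\textbf{Step 2: conic complementarity.} Complementarity for each RSOC block reads $\langle (\tfrac12 w_i,w_j,[c_{ij},s_{ij}]),(d^{v_1}_{ij},d^{v_2}_{ij},d^v_{ij})\rangle=0$. I would then use the standard fact about complementary pairs in a self-dual (rotated) Lorentz cone. Either one member of the pair is zero, or both lie on the boundary and are reflections of each other: $d^{v_1}_{ij}=\alpha w_j$, $d^{v_2}_{ij}=\alpha\tfrac12 w_i$, $d^v_{ij}=-\alpha[c_{ij},s_{ij}]$ for some $\alpha\ge0$. In the boundary case, the dual inequality \eqref{eq: dual_conic_eqs1} holds with equality immediately.

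\textbf{Step 3: remaining cases.} If the dual tuple is zero, then $2\cdot0\cdot0=0=\|0\|^2$, so the inequality is trivially tight. The only bad case is a nonzero dual tuple in the interior of the cone paired with a zero primal tuple. In that case complementarity forces $w_i=w_j=0$. This contradicts the voltage lower bounds $w_i\ge (|V_i|^{\min})^2>0$ enforced by $Cx+d\le0$, which are positive in all PGLib cases. A nonzero interior dual would force the primal to be exactly zero, which is infeasible, so the dual must lie on the boundary. This gives tightness in all cases.

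\textbf{Main obstacle.} I expect the main obstacle to be Step~1: guaranteeing strong duality and the existence of dual multipliers from mere feasibility. If strict feasibility is unavailable, I would perturb the problem or invoke the attainment result of the conic solver. I would also check carefully that the scaling $\tfrac12 w_i$ in the canonical form does not alter the reflection identity.
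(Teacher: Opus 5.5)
Your proposal is correct and follows essentially the same route as the paper: conic complementarity $\langle p,d\rangle=0$ rules out an interior dual tuple, because that would force $p=0$ and hence $w_i=w_j=0$, and tightness then follows in the boundary or zero-dual case. Your explicit appeal to the positive voltage lower bounds $w_i\ge(|V_i|^{\min})^2>0$, and your flagging of the strong-duality/KKT-existence issue, are sharper than the paper's bare ``contradicts the feasibility assumption''; the paper tacitly relies on the same positivity and simply presupposes a KKT point.
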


\begin{proof}
Choose a line $(i,j)$ and locally abbreviate $p=(\tfrac{1}{2}w_i,w_j,[c_{ij}, s_{ij}])\in\mathcal K^{\mathrm{RSOC}}$ and $d=(d_{ij}^{v_{1}},d_{ij}^{v_{2}},d_{ij}^{v})\in\mathcal K^{\mathrm{RSOC^{*}}}$ as primal and dual variable tuples. At optimality, the KKT conditions imply $\langle p,d\rangle=\tfrac{1}{2}w_{i}d_{ij}^{v_{1}}+w_{j}d_{ij}^{v_{2}}+[c_{ij},s_{ij}]d_{ij}^{v}=0$. Recall: ($i$) $\mathcal K^{\mathrm{RSOC}}$ is a closed, pointed, self-dual cone; ($ii$) if $d\in\mathrm{int}\,\mathcal K^{\mathrm{RSOC^{*}}}$, then $\langle p,d\rangle>0$ for all $p\in\mathcal K^{\mathrm{RSOC}}\setminus\{0\}$. We now consider two mutually exclusive cases for $d$.

\textit{Case I: $d \in \mathrm{int}(\mathcal K^{\mathrm{RSOC}})$ (dual strictly inside).} 
The KKT complementarity condition $\langle p,d\rangle = 0$, together with ($ii$), forces $p = 0$. Thus $w_i=w_j=c_{ij}=s_{ij}=0$. In the Jabr mapping, this implies $v^2_i=v^2_j=0$, which contradicts the feasibility assumption. Hence, at the KKT solution, this case is impossible.

\textit{Case II: $d\in\partial\mathcal K^{\mathrm{RSOC}}$ (dual on the boundary).}
Since Case I is impossible at a feasible KKT point, $d$ must lie on the boundary. Mapping back to the dual variables for line $(i,j)$, this identity is exactly the tightness condition
\begin{align}
    2d_{ij}^{v_{1}}d_{ij}^{v_{2}}=\|d_{ij}^{v}\|_2^2.
\end{align}
If the corresponding primal RSOC constraint is loose, then $d_{ij}^{v_{1}} = d_{ij}^{v_{2}}=d_{ij}^{v}=0$, by ($ii$). In this case, the boundary identity still holds (both sides are zero).
\end{proof}

As shown in the preceding Lemmas, the dual RSOC constraints arising from the Jabr relaxations, line flows, and generator quadratic costs will always hold tightly at optimality. Since \eqref{eq: dual_conic_eqs1}-\eqref{eq: dual_conic_eqs4} hold with equality, we can enforce them explicitly by eliminating one scalar conic variable in each constraint. To accomplish this, we define the replacement function $r(z)$, which replaces a dual variable (i.e., the first dual variable in each tuple from \eqref{eq:dual_tuples}) with a nonlinear function:
\begin{equation}
\label{eq:rz}
r(z,\epsilon)\triangleq
\begin{cases}
\dfrac{\|d_{\alpha_\ell}^{\ell}\|_2^2}
      {2d_{\alpha_\ell}^{\ell_2}+\epsilon},
& z=d_{\alpha_\ell}^{\ell_1},\ \ell\in\{v,s,r\},\\[4mm]
\dfrac{\|d^t\|_2^2}{2},
& z=d^{t_1},\\[1mm]
z,
& \text{otherwise}.
\end{cases}
\end{equation}

\noindent where \(\alpha_v=\alpha_s=(i,j)\), \(\alpha_r=(j,i)\) and $\epsilon$ is a numerical stability term that will be discussed later in the paper. Applying this replacement function to the vector of dual variables from \eqref{eq: conic_soln}, we arrive at Model \ref{model:all_tight_dual}. In this new model, there are no explicit RSOC constraints; there are only non-negativity constraints. We call this the ``all tight" dual model, since all dual constraints are enforced to be tight.

\begin{model}[h]
\caption{\hspace{-0.1cm}\textbf{:} ``All Tight" Dual (ATD)}
\label{model:all_tight_dual}
\vspace{-0.25cm}
\begin{subequations}\label{eq:all_tight_dual}
\begin{align}
\max_{\lambda,\mu\ge0,{\tilde d}} & \;\;\;\;\lambda^{T}b+\mu^{T}d-r({\tilde d},\epsilon)^{T}g
\label{eq:all_tight_dual_obj}\\
{\rm s.t.}\;\; & \;\;\;\;m+A^{T}\lambda+C^{T}\mu-F^{T}r({\tilde d},\epsilon)=0
\label{eq:all_tight_dual_eq}\\
 & \;\;\;\;d_{ij}^{v_{2}},d_{ij}^{s_{2}},d_{ji}^{r_{2}}\ge0,\;\forall(i,j)\in\mathcal L.
\label{eq:all_tight_dual_nonneg}
\end{align}
\end{subequations}
\vspace{-0.25cm}
\end{model}
\subsubsection*{Equivalence of ACD and ATD}
Each nontrivial component of $r(\tilde d,\epsilon)$ takes the quadratic-over-linear form $\|d^x\|_2^2/(2 d^{x2}+\epsilon)$, which is convex on $\{d^{x2} > -\epsilon/2\}$~\cite{Boyd_Vandenberghe_2004}. Consequently, Model~\ref{model:all_tight_dual} is a smooth nonlinear program: its objective \eqref{eq:all_tight_dual_obj} is concave (a sum of linear terms and negated quadratic-over-linear terms), but the equality constraint \eqref{eq:all_tight_dual_eq} is nonconvex because the quadratic-over-linear entries enter through $F^T r(\tilde d,\epsilon)$. Strictly speaking, ATD is therefore a nonconvex NLP, whereas the ACD in Model~\ref{model:all_conic_dual} is a convex RSOC program.

This apparent loss of convexity is benign and a modeling artifact, not a loss of
global information. Lemmas~\ref{lemma:1}--\ref{lemma:cs-tight} establish that the globally optimal primal--dual pair of the \emph{convex} ACD problem already satisfies every RSOC inequality with equality, so ATD is simply a reparameterization of the same optimum with one scalar variable per cone eliminated; it does not introduce spurious stationary points near the ACD optimum, as confirmed by the numerical results in Sec.~\ref{sec:Numeric}.

The practical benefits are threefold: (i) the KKT system contains no cone-complementarity block and has fewer variables and constraints, potentially shrinking the linear system factored at each interior-point iteration; (ii) interior-point line searches no longer need to remain inside the cone interior, which is precisely where Ipopt and Knitro struggle on larger ACD instances (Table~\ref{tab:consolidated}); and (iii) the feasible set reduces to non-negativity bounds, which is the structure targeted by projected and first-order GPU methods which are the downstream motivation of this work. 

\subsection{Handling non-differentiability via $\epsilon$-stabilization and dual post-processing}
\label{sec:epsilon}

In Model~\ref{model:all_tight_dual}, if we set $\epsilon=0$, many RSOC-related terms would appear in quotient form ($i,j$ dropped for clarity)
\begin{align}
d^{{x_1}} = \frac{\|d^{x}\|_2^2}{2\,d^{{x_2}}},\qquad x\in\{v,s,r,t\}.
\label{eq:cone}
\end{align}
This quotient is well-defined only when the denominator component $d^{x_2} > 0$. To avoid numerical singularities during optimization, we have introduced a small stability parameter $\epsilon>0$. In our tests, we solve the perturbed dual problem with denominators of the form $2d^{x_2}+\epsilon$, as shown in~\eqref{eq:rz}.
\begin{figure}[h]
\centering
\includegraphics[width=\columnwidth]{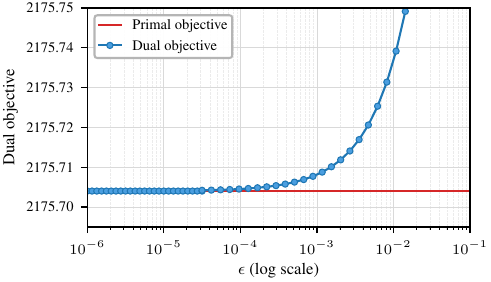}
\caption{Change of dual objective as a function of $\epsilon$. 
As the stability parameter is swept (log scale), the dual 
objective value for the 14-bus case increases monotonically.}
\label{fig:eps_sweep_case14}
\end{figure}

Unfortunately, the proposed $\epsilon$-stabilization procedure implicitly implies $2d^{x_{1}}d^{x_{2}} <\|d^{x}\|_{2}^{2}$, which is a violation of the original dual RSOC constraint.\footnote{This can be seen by taking $\epsilon>0$, setting $d^{x_{1}}=\|d^{x}\|_{2}^{2}/(2d^{x_2}+\epsilon)$, and plugging this into $2d^{x_{1}}d^{x_{2}}\ge\|d^{x}\|_{2}^{2}$: a violation of the inequality occurs.} This produces a solution which is slightly infeasible with regard to the original dual RSOC problem of Model \eqref{model:all_conic_dual}.
As a result, the $\epsilon$-perturbed dual objective is not, in general, a valid lower bound on the RSOC primal problem (i.e., weak and strong duality are both violated). This behavior is illustrated in Fig.~\ref{fig:eps_sweep_case14}, where the dual objective monotonically increases as $\epsilon$ grows. Across all tested PGLib instances, for sufficiently small $\epsilon$, the dual objective forms a reasonably flat plateau, while larger values of $\epsilon$ introduce a positive bias, resulting in increasingly loose dual objectives that significantly surpass the primal.

While using very small values of $\epsilon$ is an apparent solution to this problem, numerical solvers struggle when dual scalars $d^{x_2}$ and $\epsilon$ are both near zero. In our approach, we use a moderate value for $\epsilon$. To recover a certified lower bound for the original (unperturbed) problem, we apply a post-processing step to the solution of the $\epsilon$-stabilized dual. Denote $\bar{d}^{x_1}$, etc., as the solutions to the perturbed problem, and ${\hat d}^{x_1}$, etc., as their projected counterparts. First, tight cone feasibility is enforced by projecting the dual variables onto the appropriate RSOC faces. We consider two cases.
\begin{enumerate}
\item When a scalar component satisfies $ {\bar d}^{x_2} < \delta$, where  $\delta$ is small, the corresponding vector component is set to zero, as required by RSOC geometry:
\begin{align}
    {\hat d}^{x_{2}}={\hat d}^{x_{1}}=0, \; {\hat d}^{x}={\bm 0},\qquad x\in\{v,s,r,t\}.
\end{align}
\item When a scalar component satisfies $ {\bar d}^{x_2} \ge \delta$, then
\begin{align}
\hat{d}^{{x_1}} = \frac{\|\bar{d}^{x}\|_2^2}{2\,\bar{d}^{{x_2}}},\;\hat{d}^{x_{2}}=\bar{d}^{x_{2}},\;\hat{d}^{x}=\bar{d}^{x},\; \forall x.
\label{eq:cone_post}
\end{align}
\end{enumerate}
This yields a dual point that is feasible for the $\epsilon = 0$ problem.

After enforcing conic feasibility, we may essentially plug the resulting dual variables into the objective from \eqref{eq: dn_opt}. To state this explicitly, we collect all projected dual conic variables into $\hat{d}$. Let us now define the function $u(\lambda,\hat{d})$ via
\begin{align}
u(\lambda,\hat{d})\triangleq m^{T}+\lambda^{T}A-\hat{d}^{T}F.
\end{align}
There is no more replacement function \eqref{eq:rz} in this expression, since all duals have been projected feasible. The ACOPF relaxation is then lower bounded by
\begin{align}
f_{{\rm clb}} =& -\| u(\lambda,\hat{d})M_{\sigma}\|_{1}+u(\lambda,\hat{d})\omega+\lambda^{T}b-\hat{d}^Tg,
\end{align}
where $f_{{\rm clb}}$ is a ``certified lower bound." This procedure yields a valid dual lower bound for the original problem, where $\epsilon$ is a numerical stabilizer to facilitate dual optimization.
\section{Numerical Experiments}
\label{sec:Numeric}
To assess the performance of the proposed method, we tested the formulation on PGLib cases of 3 to 1354 buses~\cite{babaeinejadsarookolaee2019power}. We implemented a second-order cone relaxation (\texttt{SOCWRConicPowerModel}) using PowerModels.jl~\cite{PM}. All models were formulated in Julia via JuMP~\cite{JuMP}. 

The SOCWRConic relaxations were solved using MOSEK (v.~11.0.30)~\cite{MOSEK}. The SOC solutions are used as benchmarks for assessing solution accuracy.
To attain the results for the proposed models, first, we solve the \textit{primal} problem using Ipopt (v. 3.14.19)~\cite{Ipopt} and Knitro (v. 15.1.0)~\cite{Knitro}. Then, the different formulations for the \textit{dual} are tested separately with Ipopt and Knitro. 
We present Table~\ref{tab:consolidated} to show the performance of our method. The gaps are in reference to our canonical RSOC
formulation (Model~\ref{model:full_rsoc}), which is mathematically
equivalent to PowerModels' \texttt{SOCWRConicPowerModel}.
\[
{\rm Gap}=100\times
\frac{f_{\rm primal}^{\rm ref}-f_{\rm clb}}
{f_{\rm primal}^{\rm ref}},
\]
Since the All Conic Dual model includes conic constraints, Knitro is flagged to consider it as a conic problem. For transparency, the status labels in the tables are reported as returned
by the solvers. 
Table~\ref{tab:consolidated} shows the certified lower bounds and their gaps with the primal SOCP. For ATD, these gaps are reported after dual feasibility restoration.

These numerical experiments are intended primarily to validate the
structural equivalence between ATD and ACD established in
Sec.~\ref{sec:rsoctight}. The reported gaps are measured against the
corresponding primal reference objective, and therefore
serve as an accuracy check rather than a direct claim of CPU runtime
superiority. The SOCWRConic model solved with MOSEK is included as a
mature primal conic benchmark to provide a CPU solve-time baseline.

As reported in Table~\ref{tab:consolidated}, ATD consistently recovers
the primal reference objective to solver tolerance when the solver
converges, while being substantially easier to solve than ACD on the
larger instances. For example, in the 500-bus case, Ipopt solves ATD
to a 0.00\% gap in 5.70\,s, whereas ACD reaches the iteration limit
with a 98.79\% gap; Knitro solves ATD with a 0.00\% gap in 4.38\,s,
compared with 61.21\,s for ACD. The 1354-bus case shows a similar
pattern: Knitro solves ATD to a 0.00\% gap in 15.64\,s, while ACD
requires 703.78\,s and attains a 0.03\% gap. With Ipopt, both
formulations reach the iteration limit, but ATD returns a substantially
smaller gap than ACD, 18.00\% versus 52.39\%.
These results show that ATD preserves the primal benchmark
objective while replacing the ACD cone structure with a formulation
that is numerically easier for nonlinear solvers and more amenable to
future GPU-parallel first-order implementations.
\setlength{\textfloatsep}{4pt}
\setlength{\floatsep}{4pt}
\setlength{\intextsep}{4pt}
\begin{table}
\caption{Solver performance across PGLib cases. Gap (\%) is
measured against the primal value. Markers for solver status when non-convergence happens:
$^\dagger$\textsc{iter\_lim}, $^\ddagger$\textsc{infeas},
$^\S$\textsc{slow}.}
\label{tab:consolidated}
\centering
\footnotesize
\setlength{\tabcolsep}{2.6pt}
\renewcommand{\arraystretch}{1.05}
\begin{tabular*}{\columnwidth}{@{\extracolsep{\fill}} l l c c c c c c}
\toprule
& & \multicolumn{3}{c}{\textbf{All Tight Dual}} & \multicolumn{2}{c}{All Conic Dual} 
& \multicolumn{1}{c}{\makecell[c]{SOCWRConic\\(MOSEK)}} \\
\cmidrule(lr){3-5}\cmidrule(lr){6-7}
Case & Solver & Time (s) & Gap (\%) & \makecell{Post.\\(ms)}
& Time (s) & Gap (\%) & \multicolumn{1}{c}{Time (s)} \\
\midrule
\midrule
\multirow{2}{*}{3}
& Ipopt  & 0.07              & 0.00 & 0.01 & 0.09              & 0.00  & \multirow{2}{*}{0.01} \\
& Knitro & \textbf{0.00 }             & 0.00 & 0.02 & 0.00              & 0.02  &       \\
\midrule
\multirow{2}{*}{14}
& Ipopt  & 0.07              & 0.00 & 0.02 & 0.50$^\ddagger$   & 0.68  & \multirow{2}{*}{0.01} \\
& Knitro & \textbf{0.03}              & 0.00 & 0.02 & 0.09              & 0.00  &       \\
\midrule
\multirow{2}{*}{57}
& Ipopt  & 0.29              & 0.00 & 0.04 & 4.43$^\ddagger$   & 0.35  & \multirow{2}{*}{0.05} \\
& Knitro & \textbf{0.20}              & 0.00 & 0.07 & 1.27              & 0.00  &       \\
\midrule
\multirow{2}{*}{118}
& Ipopt  & 1.35              & 0.00 & 0.06 & 1.03$^\dagger$    & 11.93 & \multirow{2}{*}{0.14} \\
& Knitro & \textbf{1.27}              & 0.00 & 0.06 & 7.77$^\S$         & 0.00  &       \\
\midrule
\multirow{2}{*}{300}
& Ipopt  & 58.48             & 0.69 & 0.14 & 129.30$^\dagger$  & 32.19 & \multirow{2}{*}{0.50} \\
& Knitro & {\textbf{2.13}}     & {0.00} & 0.14 & 240.46 & 0.36 &       \\
\midrule
\multirow{2}{*}{500}
& Ipopt  & {5.70}     & {0.00} & 0.25 & 121.16$^\dagger$  & 98.79 & \multirow{2}{*}{0.94} \\
& Knitro & {\textbf{4.38}}     & {0.00} & 0.26 & 61.21             & 0.00  &       \\
\midrule
\multirow{2}{*}{1354}
& Ipopt  & 4985.35$^\dagger$ & 18.00 & 0.70 & 533.76$^\ddagger$  & 52.39 & \multirow{2}{*}{3.43$^\S$} \\
& Knitro & {\textbf{15.64}}    & {0.00} & 0.81 & 703.78            & 0.03  &       \\
\bottomrule
\end{tabular*}
\vspace{1mm}

\begin{minipage}{0.95\columnwidth}
\scriptsize
\emph{Note:} The post-processing column reports the time (ms) required for certified lower-bound recovery.
\end{minipage}
\end{table}



\section{Conclusion}
\label{sec:conclusion}

This work developed the All Tight Dual (ATD) model for the Jabr-relaxed ACOPF problem by exploiting RSOC tightness to remove auxiliary conic variables and simplify the dual feasible set. Across PGLib benchmarks spanning small to large transmission networks, ATD produces high-quality dual bounds with minimal optimality gap relative to established conic relaxations, while exhibiting improved numerical stability compared with the All Conic Dual (ACD), especially on larger instances where interior-point methods can struggle. To ensure correctness despite $\epsilon$-stabilization and solver tolerances, we also propose a one-time post-processing step that projects the dual variables onto the appropriate RSOC faces and provides a certified lower bound, which will be valid independent of the choice of $\epsilon$.


This work does not aim to benchmark CPU wall-clock performance against mature commercial conic solvers such as MOSEK. Instead, it focuses on developing and analyzing an accurate, simplified, and certified dual lower-bounding function tailored for first-order methods, with the goal of enabling efficient GPU acceleration and ultimately achieving competitive runtimes relative to state-of-the-art conic solvers. Therefore, these results motivate, as a next step, the design of first-order algorithms tailored to ATD that are amenable to GPU acceleration for scalable ACOPF.
\appendix
\subsection{Affine function for RSOC Dualization}\label{eq: affine_xfm}
As a guiding example, we may dualize the RSOC constraint $2\left(\tfrac{1}{2}\right)S_{ji}^{{2,{\rm max}}}\ge P_{ji}^{2}+Q_{ji}^{2}$ using the dual variable tuple $(d_{ji}^{r_{1}},d_{ji}^{r_{2}},d_{ji}^{r})$, which is collected into the dual variable vector $\tilde d$. We now define the affine transformation
\begin{align}
    \underbrace{\left[\begin{array}{cc}
0 & 0\\
0 & 0\\
1 & 0\\
0 & 1
\end{array}\right]}_{F}\underbrace{\left[\begin{array}{c}
P_{ji}\\
Q_{ji}
\end{array}\right]}_{x}+\underbrace{\left(\begin{array}{c}
\tfrac{1}{2}\\
S_{ji}^{{\rm max}}\\
0\\
0
\end{array}\right)}_{g}.
\end{align}
Thus, the dualization inner product is then given as
\begin{align}
\tilde{d}^{T}(Fx+g)=\tfrac{1}{2}d_{ji}^{r_{1}}+S_{ji}^{{\rm max}}d_{ji}^{r_{2}}+[P_{ji},Q_{ji}]d_{ji}^{r}.
\end{align}
We apply this procedure to the full set of primal and dual variables for all RSOC constraints to yield the expression $\tilde{d}^{T}(Fx+g)$ in \eqref{eq: lag}.


\makeatletter
\renewcommand{\IEEEbibitemsep}{0pt plus .2pt}
\makeatother

\let\oldthebibliography\thebibliography
\let\endoldthebibliography\endthebibliography
\renewenvironment{thebibliography}[1]{%
  \footnotesize
  \oldthebibliography{#1}
  \setlength{\itemsep}{0pt}
  \setlength{\parskip}{0pt}
}{\endoldthebibliography}

\bibliographystyle{IEEEtran}
\bibliography{references}
\end{document}